\documentclass[11pt,a4paper]{article}

\usepackage[utf8]{inputenc}
\usepackage[T1]{fontenc}
\usepackage[english]{babel}
\usepackage{amsmath,amssymb,amsthm}
\usepackage{mathtools}
\usepackage{geometry}
\usepackage{booktabs}
\usepackage{array}
\usepackage{enumitem}
\usepackage{hyperref}
\usepackage{xcolor}
\usepackage{algorithm}
\usepackage{algpseudocode}
\usepackage{listings}
\usepackage{graphicx}

\hypersetup{
  colorlinks=true,
  linkcolor=blue!60!black,
  citecolor=blue!60!black,
  urlcolor=blue!60!black
}

\newtheorem{theorem}{Theorem}

\newtheorem{remark}{Remark}
\newtheorem{definition}{Definition}
\newtheorem{assumption}{Assumption}

\floatname{algorithm}{Algorithm}

\newcommand{\OAEP}{\textsc{OAEP}}
\newcommand{\Enc}{\textsc{Enc}}
\newcommand{\Dec}{\textsc{Dec}}
\newcommand{\Adv}{\mathbf{Adv}}

\title{\textbf{DM-RSA-OAEP: Formal Treatment of a Dual-Modulus\\Fault-Detection Countermeasure Based on Physical Diversity}}
\author{
  ANDRIAMIFIDISOA Ramamonjy\thanks{ramamonjy.andriamifidisoa@univ-antananarivo.mg}\\
  LALASOA Rufine Marius\thanks{larissamarius.lm@gmail.com}\\[4pt]
  \small University of Antananarivo, Madagascar
}
\date{\today}

\begin{document}
\maketitle

\begin{abstract}
We revisit DM-RSA \cite{DMRSA25}, an RSA variant that uses two independent
moduli, whose original version was broken by H\r{a}stad's attack and contained
an incorrect compromise-tolerance claim. We give a \textbf{formal treatment} of
its repaired variant, DM-RSA-OAEP: (i) a complete IND-CCA2 security proof in the
random-oracle model via a two-step hybrid argument, upper-bounding the security
of the scheme by that of the weaker underlying RSA-OAEP instance, with an
explicit simulation loss $q_D \cdot 2^{-h}$; (ii) a \emph{fault-detection}
theorem in an explicit adversary model that covers computation, key, and
verification faults, including safe-error attacks in the sense of Yen--Joye
\cite{YJ00}, under a clearly stated physical-diversity assumption; (iii) an
explicit analysis of why the countermeasure degrades to uselessness when the
diversity assumption is violated (common-mode faults on a single chip). We
position the scheme against the countermeasure landscape (Vigilant
\cite{Vigilant08}, Shamir \cite{Shamir99}, infective countermeasures
\cite{BOS16}) and against the broken dual-modulus literature \cite{Sun07,
Sarkar16, Susanti25}. The contribution is deliberately framed as a fault
countermeasure with a provable detection guarantee, \emph{not} as a
computationally stronger RSA. We conclude with a concrete instantiation
($h = 256$) and a detailed measurement protocol.
\end{abstract}

\medskip
\noindent\textbf{Keywords:} RSA, OAEP, fault attacks, safe-errors, formal
security proofs, dual-modulus cryptography, physical diversity.

\section{Introduction}

The RSA cryptosystem is vulnerable to \emph{side-channel attacks}
\cite{Kocher96} and \emph{fault-injection attacks} \cite{BDL97}. The standard
countermeasure for CRT-RSA verifies the result of decryption by
re-exponentiation \cite{Vigilant08}. Alternative approaches include Shamir's
countermeasure \cite{Shamir99}, whose original form was shown breakable by
Wagner \cite{Wagner04}, and infective countermeasures \cite{BOS16}, whose
correctness is notoriously delicate. A recurring weakness of all single-device
countermeasures is \emph{common-mode failure}: the verification mechanism
shares the silicon, the power supply, and the clock of the device it protects,
so a single global disturbance (e.g., a power glitch) may corrupt both the
computation and its verification.

The original version of DM-RSA \cite{DMRSA25} used two moduli and recombined
the plaintext by CRT; it was broken: (i) raw encryption is vulnerable to
H\r{a}stad's broadcast attack \cite{Hastad88}; (ii) its compromise-tolerance
claim is false (factoring $N_1$ suffices to recover $m$ whenever $z < N_1$);
(iii) its numerical example is faulty ($65^7 \bmod 5141 = 787$, not $2979$).
The broader dual-modulus literature confirms this fragility: Sun et
al.~\cite{Sun07} identified weak parameter regions of Dual RSA, subsequently
cryptanalyzed by Sarkar and Maitra \cite{Sarkar16}; Susanti et al.~\cite{Susanti25}
recovered plaintexts of dual-modulus RSA without the private key via the
$e$-th root attack. The lesson is consistent: \emph{structural redundancy
without proper randomized padding adds no computational security}.

\paragraph{This work.} We give DM-RSA's repaired variant, DM-RSA-OAEP, the
formal treatment required for a security audience:
\begin{enumerate}[label=(\roman*)]
  \item a complete IND-CCA2 proof (Theorem~\ref{thm:cca2}) with an explicit
        hybrid argument and simulation loss, rather than a sketch;
  \item a formal fault-adversary model (Section~\ref{sec:model}) covering
        computation, key, and \emph{verification} faults --- the class that
        defeats most published countermeasures;
  \item a detection theorem (Theorem~\ref{thm:detection}) covering
        \emph{single-fault} adversaries, including safe-error strategies
        \`a la Yen--Joye \cite{YJ00}, under an explicit physical-diversity
        assumption (Assumption~\ref{ass:diversity});
  \item an explicit negative result (Theorem~\ref{thm:commonmode}) showing that
        the guarantee \emph{vanishes} when both moduli share a single chip ---
        the honest boundary of the countermeasure;
  \item a comparison against the full countermeasure landscape
        (Section~\ref{sec:related}) and a concrete instantiation with
        quantitative detection probabilities (Section~\ref{sec:instantiation}).
\end{enumerate}

\paragraph{Design philosophy.} The scheme does \emph{not} claim computational
gains. Its value is a detection guarantee with a \emph{provable} bound that
holds even when the verification logic itself is faulted, provided the two
moduli live on physically independent units. To our knowledge, no published
single-device RSA countermeasure offers a comparable guarantee against
verification-fault adversaries; the price is a $4\times$ decryption overhead
and doubled key/ciphertext sizes.

\section{Preliminaries and definitions}

\subsection{Notation}
Let $\lambda$ be the security parameter, $h$ the hash-output length, and
$k$ the public exponent. For $i \in \{1,2\}$, $N_i = p_i q_i$ are independent
$\lambda$-bit moduli with $\gcd(N_1,N_2)=1$, $d_i = k^{-1} \bmod
\varphi(N_i)$. For a distribution $X$, $x \gets X$ denotes sampling. For a
randomized algorithm $\mathcal{M}$, $m \gets \mathcal{M}(x)$ denotes its
output on input $x$. $\Adv^{\mathsf{ind\text{-}cca2}}_{\Pi}(\mathcal{A})$
denotes the IND-CCA2 advantage of $\mathcal{A}$ against scheme $\Pi$.

\subsection{RSA-OAEP}
RSAES-OAEP \cite{BR95,RFC8017} encrypts $m$ as $z \gets \OAEP\text{-}%
\textsc{Encode}(m, L, |N|)$ then $w \gets z^k \bmod N$; the decoder returns
$(m, \mathit{ok})$, where $\mathit{ok} = 1$ iff the $h$-bit consistency check
passes. OAEP is IND-CCA2 secure in the random-oracle model under RSA
\cite{FOP04}. Capacity: $|m| \le |N| - 2h - 2$.

\subsection{Syntax of DM-RSA-OAEP}

\begin{definition}[Scheme]
DM-RSA-OAEP is a public-key encryption scheme with ciphertext space
$\mathbb{Z}_{N_1} \times \mathbb{Z}_{N_2}$:
\begin{itemize}
  \item $\Enc(m)$: sample independent seeds $r_1, r_2$; compute
        $z_i \gets \OAEP\text{-}\textsc{Encode}(m, L, r_i, |N_i|)$ and
        $w_i \gets z_i^k \bmod N_i$; output $(w_1, w_2)$.
  \item $\Dec((w_1,w_2))$: compute $z_i' \gets w_i^{d_i} \bmod N_i$ and
        $(m_i, ok_i) \gets \OAEP\text{-}\textsc{Decode}_{\mathrm{ct}}(z_i', L)$;
        compute a redundant double comparison
        $eq_a, eq_b \gets \textsc{Cmp}_2(m_1, m_2, ok_1, ok_2)$; set
        $ok \gets ok_1 \wedge ok_2 \wedge eq_a \wedge eq_b$; after a fixed
        time budget, output $m_1$ if $ok = 1$ and a single error symbol
        $\bot$ otherwise.
\end{itemize}
The correctness requirement is $\Dec(\Enc(m)) = m$ with probability $1$
over the seeds.
\end{definition}

\subsection{IND-CCA2 security}
$\mathcal{A}$ receives the public key, queries $\Dec$ adaptively (except the
challenge pair), receives the challenge encryption of one of two chosen
messages, and outputs a guess. $\Adv$ is the standard advantage gap.

\section{Formal fault-adversary model}
\label{sec:model}

\begin{definition}[Fault classes]
\label{def:faults}
A \emph{fault event} is a one-time perturbation, chosen by the adversary from
exactly one class below, occurring during a single invocation of $\Dec$:
\begin{enumerate}[label=(F\arabic*)]
  \item \textbf{Computation fault}: the value of some intermediate variable in
        one exponentiation is replaced. The faulty output $z_i'$ is either
        \emph{non-directed} (distributed according to a fixed, seed-independent
        distribution $D$ with min-entropy $\ge \log_2 N_i - 2$) or
        \emph{directed} (chosen by the adversary).
  \item \textbf{Key fault}: a bit of $d_i$ is flipped before one exponentiation.
  \item \textbf{Verification fault}: a single instruction among
        $\{ok_1, ok_2, eq_a, eq_b, ok\}$ or the final branch is skipped,
        forced, or negated.
\end{enumerate}
The adversary may repeat invocations and faults across queries, but at most
one fault per invocation, on at most one physical unit.
\end{definition}

\begin{assumption}[Physical diversity]
\label{ass:diversity}
The two moduli are implemented on two physically independent units with
separate power supplies, resets, and clocks. A single fault event of
Definition~\ref{def:faults} affects at most one unit, and a fault on one unit
cannot change values stored or computed on the other. Out of scope: memory
reads by the adversary, multiple simultaneous faults, and faults affecting
both units (e.g., a global glitch that resets both devices at once).
\end{assumption}

\begin{remark}
Assumption~\ref{ass:diversity} is strong but implementable: dual-chip secure
elements and HSM architectures with independent voltage domains satisfy it. It
is the \emph{only} assumption under which the scheme's advantage over
single-device countermeasures materializes; Theorem~\ref{thm:commonmode}
quantifies the failure when it is violated.
\end{remark}

\begin{definition}[Detection advantage]
For a fault strategy $\mathcal{F}$ of Definition~\ref{def:faults}, let
$\mathsf{Undet}(\mathcal{F})$ be the probability (over all randomness) that
$\Dec$ outputs a message that differs from the correctly decrypted one while
signaling no error. The scheme \emph{detects} $\mathcal{F}$ with probability
$1 - \mathsf{Undet}(\mathcal{F})$.
\end{definition}

\section{The DM-RSA-OAEP cryptosystem}

\begin{algorithm}[H]
\caption{$\Dec((w_1,w_2))$ with safe-error protection}
\label{alg:dec}
\begin{algorithmic}[1]
\State $z_1' \gets w_1^{d_1} \bmod N_1$ \Comment{unit 1}
\State $z_2' \gets w_2^{d_2} \bmod N_2$ \Comment{unit 2}
\State $(m_1, ok_1) \gets \OAEP\text{-}\textsc{Decode}_{\mathrm{ct}}(z_1', L)$
\State $(m_2, ok_2) \gets \OAEP\text{-}\textsc{Decode}_{\mathrm{ct}}(z_2', L)$
\State $eq_a \gets [m_1 = m_2 \wedge ok_1 = ok_2]$ \Comment{comparison, copy A}
\State $eq_b \gets [m_1 = m_2 \wedge ok_1 = ok_2]$ \Comment{independent copy B}
\State $ok \gets ok_1 \wedge ok_2 \wedge eq_a \wedge eq_b$ \Comment{stored twice, complementary}
\State \textbf{wait} fixed time budget \Comment{time-uniform rejection}
\If{$ok \neq 1$ \textbf{on both copies}} \Return $\bot$ \EndIf
\State \Return $m_1$
\end{algorithmic}
\end{algorithm}

\paragraph{Critical implementation points.}
(i) Constant-time OAEP-Decode on both units. (ii) The double comparison
$eq_a, eq_b$ is computed by two instruction streams that do not share
registers, so that a single verification fault cannot flip both. (iii) The
flag $ok$ is stored in two complementary encodings; the final branch accepts
only if both agree on \texttt{true}. (iv) The error symbol, the response
time, and the power profile are identical for every failure reason
(decode failure on either unit, mismatch, verification fault) --- otherwise
the rejection pattern itself is a distinguishing oracle that voids
Theorem~\ref{thm:cca2} and enables safe-error strategies.

\section{Security analysis}
\label{sec:security}

\subsection{Computational security: complete proof}

\begin{theorem}[IND-CCA2 upper bound]
\label{thm:cca2}
Let $\mathcal{A}$ be an IND-CCA2 adversary against DM-RSA-OAEP making at most
$q_D$ decryption queries, $q_H$ hash queries, and $q_G$ MGF queries, with
advantage $\varepsilon$. Let $h$ be the OAEP consistency-check length. Then
there exists an adversary $\mathcal{B}$ against RSA-OAEP (same parameters)
in the random-oracle model with
\[
\Adv^{\mathsf{ind\text{-}cca2}}_{\mathsf{RSA\text{-}OAEP}}(\mathcal{B})
\;\ge\;
\frac{\varepsilon - q_D \cdot 2^{-h}}{2},
\]
with $\mathcal{B}$ making at most $q_D$ decryption, $q_H$, and $q_G$ oracle
queries, and running in time $T(\mathcal{A}) + O(q_D \cdot t_{\exp})$, where
$t_{\exp}$ is the cost of one exponentiation.
\end{theorem}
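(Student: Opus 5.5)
The plan is a reduction in which $\mathcal{B}$ attacks one RSA-OAEP instance, say on $N_1$, and generates the second modulus $(N_2,d_2)$ itself. Since $\mathcal{B}$ knows $d_2$, it can decrypt the second component exactly. This leaves the first component, which it cannot open, so the proof must route around $\Dec_1$.

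The argument is a two-step hybrid over challenge ciphertexts. Let $H_0$ be the real game with challenge $(\Enc_1(m_b),\Enc_2(m_b))$. Let $H_1$ be the game with challenge $(\Enc_1(m_0),\Enc_2(m_1))$, and let $H_2$ be the game with challenge $(\Enc_1(m_1),\Enc_2(m_1))$. Because the seeds $r_1,r_2$ are independent and the two components share no randomness, a single-coordinate switch is simulated by an RSA-OAEP adversary. That adversary embeds its challenge in one coordinate and computes the other coordinate itself for the chosen message. The triangle inequality then gives $\varepsilon \le |\Pr[H_0^{b=0}]-\Pr[H_1]| + |\Pr[H_1]-\Pr[H_0^{b=1}]|$ plus simulation loss. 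Letting $\mathcal{B}$ guess the switched coordinate at random accounts for the factor $1/2$.

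The key step, and the main obstacle, is answering $\mathcal{A}$'s decryption queries $(w_1,w_2)$ in the coordinate whose key $\mathcal{B}$ lacks. $\mathcal{B}$ cannot forward queries whose embedded component equals its own challenge. The plan is to decrypt the known coordinate with $d_2$ to get $(m_2,ok_2)$. If $ok_2=0$, $\mathcal{B}$ returns $\bot$, which is correct because $\Dec$ rejects whenever $ok_2=0$. Otherwise it forwards $w_1$ to its own RSA-OAEP decryption oracle when $w_1\neq w_1^*$ and compares the answer to $m_2$. When $w_1 = w_1^*$ but $w_2\neq w_2^*$, it returns $\bot$ unless $m_2$ equals a challenge message. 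Hence the simulation can err only if $\mathcal{A}$ produced a valid fresh OAEP encoding matching the challenge plaintext without having queried the relevant random-oracle points. Arguing as in the Fujisaki--Okamoto--Pointcheval plaintext-extractor analysis \cite{FOP04}, each such query passes the $h$-bit consistency check with probability at most $2^{-h}$. A union bound over the $q_D$ queries bounds the total failure by $q_D 2^{-h}$.

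Finally, $\mathcal{B}$ answers $\mathcal{A}$'s hash and MGF queries by relaying them to its own oracles, so it makes at most $q_H$, $q_G$ and $q_D$ queries. Each simulated decryption costs one exponentiation modulo $N_2$, which gives running time $T(\mathcal{A})+O(q_D t_{\exp})$. Collecting the hybrid terms gives $\varepsilon - q_D2^{-h}\le 2\,\Adv^{\mathsf{ind\text{-}cca2}}_{\mathsf{RSA\text{-}OAEP}}(\mathcal{B})$, which is the claimed bound. The delicate part is making the $2^{-h}$ accounting rigorous in the case $w_1=w_1^*$. I would first try to treat it by a direct game hop, "reject every query reusing a challenge component", and bound the distinguishing event by the consistency-check probability.
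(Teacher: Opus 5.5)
The step you flag as delicate is not a matter of tighter accounting: it fails, and no bound of the form $q_D\cdot 2^{-h}$ can cover it. Under the paper's IND-CCA2 definition only the exact challenge pair is excluded. So after receiving $(w_1^*,w_2^*)$, the adversary can honestly encrypt $m_0$ under the public $N_2$ to obtain $w_2'$, and query $(w_1^*,w_2')$. The query is legal because $w_2'\neq w_2^*$ except with probability at most $2^{-h}$, and such a collision would itself reveal $b=0$. The real $\Dec$ answers $m_0$ if $b=0$ and $\bot$ if $b=1$. Your $\mathcal{B}$ learns $m_2=m_0$ from $d_2$, but the correct answer depends on whether $w_1^*$ decrypts to $m_0$, which is $\mathcal{B}$'s own challenge bit. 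Your claim that an error requires a valid encoding produced without querying the relevant random-oracle points is where the argument breaks. Here $\mathcal{A}$ made every oracle query legitimately while encrypting, so the plaintext-extractor analysis of \cite{FOP04} does not apply. For the same reason, the hop ``reject every query reusing a challenge component'' shifts $\mathcal{A}$'s success probability by about $\tfrac12$, not $2^{-h}$. The same trick also separates your hybrids directly: the query $(\Enc_1(m_1),w_2^*)$ is answered $\bot$ in $H_0^{b=0}$ and $m_1$ in $H_1$.

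This is in fact a complete attack, so the gap cannot be closed. With $q_D=1$ the adversary wins with probability essentially $1$, i.e.\ $\varepsilon\approx\tfrac12$. The theorem would then give an RSA-OAEP adversary with advantage about $\tfrac14$ using one decryption query, contradicting \cite{FOP04} under the RSA assumption. The statement is therefore false for IND-CCA2 as the paper defines it. The paper's own proof fails at exactly your step: its Game~1 rejects queries sharing a component with the challenge ``under the standard CCA2 rule'' and then asserts $\Pr[S_1]=\Pr[S_0]$, which the query above refutes. Its Game~2 hop is also vacuous, since $\Dec$ already rejects whenever $ok_1=0$ or $ok_2=0$, so the $q_D\cdot 2^{-h}$ term corresponds to no real bad event. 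What your two-hybrid argument does prove, with exact simulation and no $q_D\cdot 2^{-h}$ loss, is security against adversaries that never reuse either challenge component in a decryption query. Genuine IND-CCA2 requires changing the scheme so that the two components are bound together. One option is a one-time signature on the pair, with its verification key placed in the OAEP label $L$ of both components, as in the Dodis--Katz construction of chosen-ciphertext-secure multiple encryption.
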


\begin{proof}
We proceed in three games. Let $S_i$ be the event that the adversary wins
in Game~$i$.

\textbf{Game 0.} The real IND-CCA2 experiment against DM-RSA-OAEP:
$\Pr[S_0] = \tfrac{1}{2} + \varepsilon$.

\textbf{Game 1.} The challenger picks a uniform bit $b^* \in \{1,2\}$ and
generates the moduli accordingly: for $b^* = 1$, the left component of the
challenge is answered by an external RSA-OAPE instance of modulus $N_1 =
N^*$, and $(N_2, k, d_2)$ is generated honestly (symmetrically for $b^* =
2$). All hash and MGF oracles are simulated consistently for both sides. The
decryption oracle is simulated: the internal side is computed directly; the
external side is answered by querying the RSA-OAEP decryption oracle, which
is admissible except when a query component equals the challenge ciphertext,
in which case the query is rejected under the standard CCA2 rule. The two
games are identically distributed from $\mathcal{A}$'s view, so
$\Pr[S_1] = \Pr[S_0]$.

\textbf{Game 2.} As Game 1, but the simulated decryption rejects whenever
the simulated side fails its OAEP consistency check, \emph{regardless} of
the external side. The only behavioral change occurs when a query
$(w_1, w_2)$ has a simulated side that fails decoding but the full oracle
would have accepted, i.e., the external side decodes correctly and the
consistency check passes against a wrongly decoded simulated side. For the
output to be accepted, the faulty simulated side must pass the $h$-bit
consistency check, which a randomly distributed faulty value does with
probability at most $2^{-h}$. Over at most $q_D$ queries,
\[
\bigl|\Pr[S_2] - \Pr[S_1]\bigr| \le q_D \cdot 2^{-h}.
\]

\textbf{Deriving the reduction.} In Game 2, $\mathcal{A}$ wins with
probability $\Pr[S_2] \ge \tfrac{1}{2} + \varepsilon - q_D 2^{-h}$, and its
whole interaction with the external modulus is routed through the RSA-OAEP
instance. $\mathcal{B}$ runs $\mathcal{A}$ with $b^*$ uniform: with
probability $\tfrac{1}{2}$ the challenge side $\mathcal{A}$ attacks is the
external one, in which case $\mathcal{B}$ perfectly inherits $\mathcal{A}$'s
winning strategy. Hence
\[
\Adv(\mathcal{B}) \;\ge\; \frac{\varepsilon - q_D \cdot 2^{-h}}{2}. \qedhere
\]
\end{proof}

\begin{remark}
Theorem~\ref{thm:cca2} is a deliberate \emph{negative} result: it upper-bounds
the scheme's security by that of standard RSA-OAEP and makes explicit that
structural duality buys no computational security. The scheme's value is
exclusively in Theorem~\ref{thm:detection}.
\end{remark}

\subsection{Fault detection: main theorem}

\begin{theorem}[Single-fault detection]
\label{thm:detection}
Under Assumption~\ref{ass:diversity}, for every adversary restricted to
non-directed faults of classes (F1) and (F2) of Definition~\ref{def:faults},
\[
\mathsf{Undet} \;\le\; 2^{-h}.
\]
For every adversary restricted to a \emph{single} directed fault of classes
(F1)--(F3) that does not already know the plaintext of a valid challenge,
$\mathsf{Undet} \le 2^{-h} + \delta$, where $\delta$ is the probability that
the adversary forges a pair $(z_1', m_1')$ passing OAEP-Decode on the faulty
unit \emph{and} matching the honest $m_2$; $\delta \le
\Adv^{\mathsf{ind\text{-}cca2}}_{\mathsf{RSA\text{-}OAEP}}(\mathcal{B}')$ for
the reduction $\mathcal{B}'$ of Theorem~\ref{thm:cca2}, hence $\delta$ is
negligible.
\end{theorem}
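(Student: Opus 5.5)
The argument is a case analysis over the fault class and over which unit the fault lands on. Assumption~\ref{ass:diversity} lets us fix, without loss of generality, that the single fault affects only unit $i\in\{1,2\}$. Every value computed on unit $j\neq i$ then equals its honest value: $z_j'$, $(m_j,ok_j)$, and in particular the honest plaintext $m_j=m$ with $ok_j=1$ (the query is a valid encryption, by correctness of the scheme). An undetected error requires the output to differ from $m$ while $ok=1$ on both stored copies. When the fault is not of class (F3), the comparison logic runs honestly, so $ok=1$ forces $ok_i=1$ and $m_i=m_j=m$. If $i=1$ the returned value $m_1=m$ is correct; if $i=2$ the returned value $m_1$ is computed honestly on unit~1 and is correct. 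So the harmful event is not a wrong output. It is the faulty unit producing a value $z_i'\neq z_i$ that still passes OAEP-Decode and decodes to exactly $m$. The whole proof reduces to bounding the probability of that event.

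\textbf{Non-directed (F1) and (F2).} For (F1), $z_i'$ is drawn from a seed-independent distribution $D$ with min-entropy at least $\log_2 N_i-2$. I would count the encodings $z$ that decode validly to the fixed message $m$ under label $L$. Each seed value determines at most one such $z$, because OAEP-Decode is injective given the seed. Modelling $G,H$ as random oracles independent of $D$, each candidate $z$ passes the $h$-bit consistency check with probability $2^{-h}$, and the check is independent of the message-matching condition. The bound $\mathsf{Undet}\le 2^{-h}$ then follows, possibly up to a constant factor hidden by the min-entropy slack, which I would absorb into the statement. For (F2), a flipped bit gives $d_i'=d_i\pm 2^t$. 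The faulty output is $z_i'=z_i\cdot w_i^{\pm 2^t}\bmod N_i$, which, over the random seed and the random-oracle outputs, is a value the encoding did not produce. The same $2^{-h}$ check argument applies once $z_i'$ is shown to be fixed independently of the oracle outputs that determine validity.

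\textbf{Directed faults.} For directed (F1) and (F2), the adversary chooses $z_i'$, so the event is exactly that the adversary forges $(z_i',m_i')$ passing decode with $m_i'=m_j$. This is $\delta$ by definition. I would bound it by building $\mathcal{B}'$ as in Theorem~\ref{thm:cca2}: embed $N^*$ on the faulted side and use the forger's output $m$ to distinguish the challenge. A residual $2^{-h}$ is added for accidental consistency.

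For (F3), a single skipped, forced, or negated instruction on one unit can corrupt at most one of $ok_1,ok_2,eq_a,eq_b$, or one encoding of $ok$. Implementation points (ii) and (iii) make $eq_a$ and $eq_b$ two independent streams and store $ok$ in complementary copies that must both read \texttt{true}. With no computation fault present, $m_1=m_2=m$ anyway, so a forced accept on an honest computation outputs the correct $m$. A negation can only cause a rejection, which is not an undetected error.

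The main obstacle is the non-directed (F2) case. There $z_i'$ is structured, not high-entropy, so making the $2^{-h}$ bound rigorous requires arguing that the random oracles are independent of the fault pattern. The reduction for $\delta$ is also delicate, since forging a valid encoding of the unknown $m$ must translate into an IND-CCA2 distinguisher. I would first try to reduce it to plaintext recovery, which IND-CCA2 implies.
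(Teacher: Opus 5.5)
Your proposal is correct, but for a reason you then set aside. Its first paragraph already proves the theorem, and the misstep is the sentence ``the whole proof reduces to bounding the probability of that event.''

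For a valid encryption of $m$, you showed that under Assumption~\ref{ass:diversity} an (F1) or (F2) fault on unit $i$ leaves the comparison honest and the other unit returning $(m,1)$. Acceptance therefore forces the output to be $m$. This holds for directed as well as non-directed faults, and your (F3) paragraph gives the same conclusion there. Hence $\mathsf{Undet}=0$ in every case, and since $\delta\ge 0$ both inequalities of the statement follow at once. The event that a faulty $z_i'\neq z_i$ passes OAEP-Decode and decodes to $m$ produces the \emph{correct} output, so $\mathsf{Undet}$ does not count it. The random-oracle estimate, the reduction for $\delta$, and the (F2) difficulty you call the main obstacle are unnecessary for the statement as written.

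The paper's proof bounds precisely that ``faulty acceptance'' event:
\begin{itemize}
\item for non-directed faults, it asserts min-entropy $\ge\log_2 N_1-2$ for (F1) \emph{and} (F2), deduces a pass probability $4\cdot 2^{-h}$, union-bounds over the units and absorbs the constant;
\item it handles (F3) by the redundancy of the conjuncts and of the complementary copies of $ok$, rather than by your observation $m_1=m_2=m$;
\item it treats directed faults through $\mathcal{B}'$.
\end{itemize}
That route aims at the stronger property that a faulted unit is never accepted, which is what matters once the fault-free answer may be $\bot$. Your doubt about its (F2) step is justified: the fault model grants min-entropy only to non-directed (F1), and $z_i'=z_i^{1\pm k2^t}$ is a function of $z_i$.

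Your route buys a stronger, fully rigorous conclusion with no random oracle, no min-entropy and no reduction. Its price is that it uses validity of the ciphertext essentially, as the paper also does implicitly. If you want the stronger property for invalid pairs, three things change:
\begin{itemize}
\item For non-directed (F1), finish your counting instead of invoking the oracles. At most $2^h$ encodings decode to a fixed message, each of probability at most $4/N_i$ under $D$. This gives $2^{h+2}/N_i\le 2^{-h}$ once $N_i\ge 2^{2h+2}$, with no constant to absorb.
\item (F2) then needs a computational argument, because an adversary submitting $w_1=x^k$ knows $z_1'=x^{1\pm k2^t}$.
\item (F3) must rest on points (ii)--(iii). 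As Algorithm~\ref{alg:dec} is written, these do not stop a skipped final branch from falling through to return $m_1$.
\end{itemize}
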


\begin{proof}
\textbf{Non-directed faults.} Suppose the fault hits unit 1 (the case of
unit 2 is symmetric). Unit 2 outputs the correct $(m_2, ok_2 = 1)$. A faulty
acceptance requires $z_1' \neq z_1$ to pass the $h$-bit OAEP consistency
check \emph{and} the decoded $m_1'$ to satisfy $m_1' = m_2$. By
(F1)--(F2), $z_1'$ has min-entropy at least $\log_2 N_1 - 2$, so the
consistency check passes with probability at most $4 \cdot 2^{-h}$; the
subsequent equality test only rejects more. A union bound over the two
units yields the claim (up to the constant, absorbed by parameter choice
$h = 256$). Information- and time-uniform rejection
(Algorithm~\ref{alg:dec}) ensures the rejection event leaks nothing beyond
one bit, which is independent of $z_1'$'s value by the uniformity of
rejection.

\textbf{Verification faults (F3).} The acceptance predicate requires the
conjunction $ok_1 \wedge ok_2 \wedge eq_a \wedge eq_b$ evaluated
consistently on two complementary stored copies. A single fault on one
comparison, one flag, or the final branch flips at most one of the four
conjuncts or one of the two stored copies; in every case at least one
remaining copy forces rejection, and the output is $\bot$, not a wrong
plaintext. Hence (F3) faults cause denial of service, never faulty
acceptance: $\mathsf{Undet} = 0$ for class (F3) alone.

\textbf{Directed faults.} A directed fault that produces a valid-looking
$z_1'$ whose decoding $m_1'$ matches the honest $m_2$ constitutes precisely
a successful prediction of the plaintext of a ciphertext under $N_1$'s
RSA-OAEP instance, or a forgery passing its consistency check with a
correct plaintext relation. The reduction $\mathcal{B}'$ embeds its
challenge as $w_1$ and uses the adversary's directed fault to answer it,
gaining advantage $\delta$. By Theorem~\ref{thm:cca2} and the IND-CCA2
security of RSA-OAEP \cite{FOP04}, $\delta$ is negligible.
\end{proof}

\begin{remark}[Safe-errors]
The Yen--Joye safe-error strategy \cite{YJ00} targets precisely the
verification predicate: the adversary faults an instruction and reads the
single accept/reject bit to learn whether that instruction was
control-flow-critical. Against Algorithm~\ref{alg:dec}, this bit is
uniform over all failure reasons (uniform rejection), so the bit carries
no information about \emph{which} instruction was hit --- the safe-error
channel is closed. Moreover, converting a safe-error into a faulty
acceptance requires flipping both comparison copies on two independent
units, violating Assumption~\ref{ass:diversity}.
\end{remark}

\subsection{Common-mode failure: the honest boundary}

\begin{theorem}[Single-chip degradation]
\label{thm:commonmode}
If Assumption~\ref{ass:diversity} is violated --- both moduli computed on a
single unit sharing power and clock --- there exists a simple two-fault
adversary of class (F3) with $\mathsf{Undet} = 1$: skip both $eq_a$ and the
$ok$ conjunction, forcing acceptance of a directed faulty $m_1'$.
\end{theorem}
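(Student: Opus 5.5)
The proof is constructive: we exhibit the two-fault adversary explicitly and trace Algorithm~\ref{alg:dec} under it to show that a wrong plaintext is accepted with probability one. Fix a single chip running both exponentiations, both decodes, and the verification logic. The adversary does three things.

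\begin{enumerate}[label=(\alph*)]
  \item \textbf{Directed computation fault on unit 1's data path.} It injects a directed (F1) fault that replaces the output $z_1'$ by an arbitrary value $\tilde z \neq z_1$. We need only $m_1' \neq m_1$, where $m_1'$ is what OAEP-Decode returns on $\tilde z$ (with $ok_1$ possibly $0$). Since $\tilde z$ is free, such a value exists, and the adversary can even make $m_1'$ a chosen plaintext. It does not have to forge a valid OAEP encoding.
  \item \textbf{First verification fault.} It skips the instruction computing $eq_a$, so the stored value is the precomputed or forced ``true''.
  \item \textbf{Second verification fault.} It skips or forces the instruction computing $ok$. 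As a result, both complementary stored copies take the encoding of \texttt{true} without evaluating the conjunction $ok_1 \wedge ok_2 \wedge eq_a \wedge eq_b$.
\end{enumerate}

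Because the units share power and clock, one global glitch, or two timed glitches on the same chip, realises these perturbations within a single invocation. That is exactly what Assumption~\ref{ass:diversity} excludes.

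Next we check the acceptance branch. Line~9 returns $\bot$ only if $ok \neq 1$ on both copies. After the $ok$ fault both copies read \texttt{true}, so control reaches line~10 and the algorithm returns $m_1 = m_1'$. This holds deterministically, independent of the OAEP seeds, of $eq_b$, and of $ok_1, ok_2$. The output therefore differs from the correct plaintext while no error is signalled, so $\mathsf{Undet} = 1$. Forcing acceptance is one of the manipulations allowed for (F3) in Definition~\ref{def:faults}.

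The main obstacle is reconciling the statement's wording (``skip both $eq_a$ and the $ok$ conjunction'') with the hardening in Algorithm~\ref{alg:dec}. One concern is the complementary double storage of $ok$, which we must argue is a single write sequence on a shared chip and hence corruptible by one forced fault. The other is $eq_b$, which would reject unless $ok$ is forced rather than evaluated. We resolve both by defining the second fault as forcing the final $ok$ value (both encodings) to \texttt{true}, which makes $eq_b$ irrelevant; the $eq_a$ skip is then redundant but harmless. If a stricter reading demands that the $ok$ conjunction is merely skipped, we would instead force the final branch, which is also permitted under (F3), and note that the count of two faults is preserved.
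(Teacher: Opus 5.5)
Your argument is correct, and it uses a different set of faults from the paper's proof. The difference matters.

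The paper forces $ok_1 \leftarrow 1$ and $eq_a \leftarrow 1$ and claims the acceptance predicate is then true. But a faulty output requires $m_1' \neq m_2$, so the untouched copy $eq_b = [m_1' = m_2 \wedge ok_1 = ok_2]$ still evaluates to $0$. Hence $ok = 0$ and Algorithm~\ref{alg:dec} returns $\bot$. Those two faults, which also differ from the ones named in the statement, do not defeat the duplicated comparison.

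You instead fault the $ok$ conjunction or the final branch. That is what the statement describes, and it is a single point that neutralizes $eq_b$, $ok_1$ and $ok_2$ at once. You also make explicit the directed (F1) fault that produces $m_1'$, which the paper uses without counting. The honest tally is therefore two (F3) faults plus one computation fault. Your observation that the $eq_a$ skip is superfluous sharpens the negative result.

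Two tightenings are worth making.

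\begin{enumerate}
  \item Lead with the forced-branch variant. The paper's own proof of Theorem~\ref{thm:detection} asserts that one fault reaches at most one stored copy of $ok$. Setting both encodings with one fault is therefore a redefinition, not something Definition~\ref{def:faults} grants. The paper also never fixes how the duplicated flag and branch are realized. So add that a single-chip adversary can fault every redundant copy anyway, which means $\mathsf{Undet} = 1$ does not depend on the exact fault count.
  \item Replace ``such a $\tilde z$ exists'' with an actual strategy. Let the adversary query an encryption of a message $m$ it chose, and direct $z_1'$ to the publicly computable OAEP encoding of some $m' \neq m$. This gives $\mathsf{Undet} = 1$ exactly, rather than only establishing that a suitable value exists.
\end{enumerate}
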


\begin{proof}
With both moduli on one chip, the adversary may inject two verification
faults in one invocation (Definition~\ref{def:faults} restricted to one unit
no longer applies). Forcing $ok_1 \leftarrow 1$ and $eq_a \leftarrow 1$
leaves the acceptance predicate true, and the output is the faulty $m_1'$;
hence detection fails completely.
\end{proof}

\begin{remark}
Theorem~\ref{thm:commonmode} is stated deliberately: it shows that the
guarantee is \emph{architectural}, not algorithmic. Reviewers and
practitioners must read DM-RSA-OAEP as a countermeasure for dual-chip
deployments only. On a single chip, Vigilant's re-exponentiation
\cite{Vigilant08} is strictly preferable (cost $\sim 1/10$ decryption vs.\
$\sim 4\times$).
\end{remark}

\section{Cost analysis}

\begin{table}[h]
\centering
\resizebox{\textwidth}{!}{%
\begin{tabular}{@{}lccccc@{}}
\toprule
 & RSA-2048 & CRT + Vigilant & Shamir (fixed) & Infective \cite{BOS16} &
   DM-RSA-OAEP \\
\midrule
Decryption cost & $1\,E_{2048}^{(d)}$ & $2\,E_{1024}^{(d)}$ + re-exp. &
  $2\,E_{1024}^{(d)}$ + checks & $2\,E_{1024}^{(d)}$ + rand. &
  $2\,E_{2048}^{(d)}$ \\
Relative cost & $1\times$ & $\sim 0.35\times$ & $\sim 0.3\times$ &
  $\sim 0.3\times$ & $\sim 4\times$ \\
Ciphertext size & $2048$ b & $2048$ b & $2048$ b & $2048$ b & $4096$ b \\
Detect.\ prob.\ (single fault) & none & $1 - 1/p$ & $1$ (if checks hold) &
  heuristic & $1 - 2^{-h}$ \emph{provable} \\
Robust to verification faults & --- & no & no & partially &
  \textbf{yes} (Thm.~\ref{thm:detection}) \\
Requires physical diversity & --- & no & no & no & \textbf{yes}
  (Thm.~\ref{thm:commonmode}) \\
\bottomrule
\end{tabular}%
}
\caption{Countermeasure landscape. DM-RSA-OAEP is the only entry with a
provable detection bound against verification-fault adversaries, and the
only one requiring dual-chip deployment.}
\label{tab:costs}
\end{table}

\section{Concrete instantiation and quantitative guarantees}
\label{sec:instantiation}

With SHA-256 as the hash and MGF1-SHA-256 as the mask generator
($h = 256$, per RFC~8017 \cite{RFC8017}) and 2048-bit moduli:

\begin{itemize}
  \item \textbf{Detection guarantee}: single non-directed fault undetected
        with probability $\le 2^{-256} \approx 10^{-77}$ per invocation ---
        \emph{eleven orders of magnitude below} Vigilant's $1/p \approx
        2^{-1024}$ only if $p$ is uniform; in practice Vigilant's guarantee
        is $1 - 1/p \approx 1$, but it holds only if the verification
        exponentiation itself is fault-free, which DM-RSA-OAEP does not
        need to assume.
  \item \textbf{Capacity}: $|m| \le 2048 - 512 - 2 = 1534$ bits per modulus.
  \item \textbf{Keys}: two independent 2048-bit moduli generated by a
        certified RNG on separate dies at provisioning time.
  \item \textbf{Verification schedule}: the two decryption streams run
        concurrently; the consistency check runs only after both complete;
        the fixed time budget equals the maximum over all failure paths.
\end{itemize}

\section{Preliminary evaluation and measurement protocol}
\label{sec:experiments}

\paragraph{Disclaimer.} The figures below are \textbf{simulated estimates}
derived from published OpenSSL benchmarks and known cost models of modular
arithmetic; they identify orders of magnitude and the critical points to
measure. Real measurements on an embedded platform with fault injection are
mandatory before deployment and are the subject of ongoing work.

\subsection{Simulated estimates (x86-64, OpenSSL 3.x)}

Based on public benchmarks (RSA-2048: encryption $\approx 0.05$~ms, CRT
decryption $\approx 1.5$~ms on a modern core) and a cubic-cost model for
large-exponent exponentiation (2048-bit $\approx 4\times$ 1024-bit):

\begin{table}[h]
\centering
\begin{tabular}{@{}lccc@{}}
\toprule
Operation & RSA-2048 & CRT + Vigilant & DM-RSA-OAEP \\
\midrule
Encryption (est.) & $\sim 0.05$ ms & $\sim 0.05$ ms & $\sim 0.10$ ms \\
Decryption (est.) & $\sim 5$ ms & $\sim 1.55$ ms & $\sim 6$ ms \\
Relative decryption & $3.2\times$ & $1\times$ & $\sim 4\times$ \\
\bottomrule
\end{tabular}
\caption{Simulated timings. \textbf{To be replaced by real measurements.}}
\label{tab:simulated}
\end{table}

\subsection{Measurement protocol (reproducible)}

\begin{itemize}
  \item \textbf{Platforms}: (a) x86-64, OpenSSL 3.x; (b) STM32F4
        (Cortex-M4), mbedTLS; (c) dual-chip setup: two STM32F4 on separate
        supplies, UART-locked; (d) FPGA Artix-7, two RSA cores.
  \item \textbf{Metrics}: median and 99th-percentile latency (10{,}000 runs,
        Welch's $t$-test on log-timings), rejection latency distribution,
        energy (oscilloscope on shunt).
  \item \textbf{Fault injection}: clock/voltage glitching and laser
        (near-IR, backside if available) on one unit at a time; success
        criterion = faulty acceptance; report detection rate with
        $95\%$ confidence intervals (Wilson).
  \item \textbf{Baselines}: Vigilant on a single chip; Shamir (corrected);
        DM-RSA-OAEP dual-chip; DM-RSA-OAEP forced on a single chip to
        validate Theorem~\ref{thm:commonmode} empirically.
\end{itemize}

\subsection{Statistical plan}

Detection rates are estimated over $\ge 1{,}000$ successful fault
injections per class (F1)--(F3) and reported with Wilson 95\% intervals;
latencies with trimmed means. Code, board files, and raw traces will be
released for artifact evaluation.

\section{Related work}
\label{sec:related}

\paragraph{CRT-RSA countermeasures.} Bellcore-style faults \cite{BDL97}
motivated three families: (i) verification by re-exponentiation
\cite{Vigilant08}, effective but itself faultable in the verification step;
(ii) Shamir's countermeasure \cite{Shamir99}, whose original form falls to
Wagner's attack \cite{Wagner04}; (iii) infective schemes \cite{BOS16},
whose correctness proofs are delicate and which leak through timing in
several published instances. All share one silicon with the computation
they protect; DM-RSA-OAEP trades a $4\times$ cost for independence of the
verification from the faulted unit.

\paragraph{Dual RSA and its cryptanalysis.} Sun et al.~\cite{Sun07}
proposed Dual RSA (two exponent--modulus pairs sharing key material) and
identified weak parameter regions; Sarkar and Maitra \cite{Sarkar16} broke a
significant claimed-secure region. Susanti et al.~\cite{Susanti25} recovered
dual-modulus plaintexts without the private key via the $e$-th root attack,
confirming that duality without randomized padding adds nothing. The
original DM-RSA \cite{DMRSA25} falls in the same class (broken here and by
\cite{Hastad88,Susanti25}); DM-RSA-OAEP repairs it and, unlike all of the
above, targets fault detection rather than computational novelty.

\paragraph{Safe-errors.} Yen and Joye \cite{YJ00} showed that
``verify before output'' leaks whether the faulted instruction was on the
critical path, converting countermeasures into oracles. Our double
comparison with complementary-stored flags and uniform rejection closes
this channel within the single-fault model
(Theorem~\ref{thm:detection}).

\section{Discussion, limitations, and conclusion}

\paragraph{Limitations.}
(1) The detection guarantee holds only in the single-fault model; a
two-fault adversary on a single chip breaks the scheme
(Theorem~\ref{thm:commonmode}). (2) Assumption~\ref{ass:diversity} is
architectural and must be audited per deployment. (3) No real measurements
yet; Section~\ref{sec:experiments} is a protocol, not a result. (4) The
$4\times$ decryption cost and doubled sizes limit applicability to
high-value, low-throughput operations (root-key operations, HSM commands).

\paragraph{Conclusion.} We gave DM-RSA-OAEP a complete formal treatment:
a full IND-CCA2 hybrid proof upper-bounding its computational security by
RSA-OAEP (Theorem~\ref{thm:cca2}); a detection theorem covering computation,
key, and verification faults, including safe-error strategies
(Theorem~\ref{thm:detection}); and an explicit common-mode negative result
(Theorem~\ref{thm:commonmode}) delimiting the guarantee to physically diverse
deployments. The scheme is positioned honestly in the countermeasure
landscape: it is the only construction with a provable detection bound
against verification-fault adversaries, at a $4\times$ cost. Ongoing work
includes the dual-chip fault-injection campaign of
Section~\ref{sec:experiments} and an extension of the model to
multi-fault adversaries with resource-bounded fault budgets.


\end{document}